\documentclass[11pt,a4paper]{article}
\usepackage{fullpage,amssymb,amsmath,amsfonts,amsthm,paralist,hyperref,graphicx,color}
\usepackage[normalem]{ulem}

\let\eps\varepsilon
\newtheorem{lemma}{Lemma}
\newtheorem{theorem}{Theorem}

\title{Finding the vertices of the convex hull, even unordered, takes $\Omega(n \log n)$ time---a proof by reduction from $\eps$-closeness}
\author{Herman Haverkort\thanks{Institut f\"ur Informatik, Universit\"at Bonn, Germany, cs.herman@haverkort.net. Written in response to a question in the computational geometry class of autumn 2018. I thank Rolf Klein for urging me to write it up.}}
\date{4 December 2018}

\begin{document}
\maketitle

\begin{abstract}
We consider the problem of computing, given a set $S$ of $n$ points in the plane, which points of $S$ are vertices of the convex hull of $S$. For certain variations of this problem, different proofs exist that the complexity of this problem in the algebraic decision tree model is $\Omega(n \log n)$. This paper provides a relatively simple proof by reduction from $\eps$-closeness.
\end{abstract}

\paragraph{The problem}
In the algebraic decision tree model of computation, it takes $\Omega(n \log n)$ steps in the worst case to compute the convex hull of a set of $n$ points in the plane. In text books~\cite{K,PS} this is proven by reduction from sorting as follows. Consider a set $A = a_1,...,a_n$ of $n$ distinct real numbers that need to be sorted. Now let $L = L_1,...,L_n$ be the set of points given by $L_i = (a_i, a^2_i)$. The convex hull of $L$, as a closed counterclockwise polygonal chain, contains the points of $L$ in order from left to right. Therefore, by computing the convex hull and then reading the $x$-coordinates of its vertices in order, we can obtain $A$ in sorted order. Hence, lower bounds for sorting also apply to the computation of convex hulls.

However, if we do not require the convex hull to be produced as a counterclockwise sequence of vertices (or edges), but settle for obtaining its vertices in arbitrary order, may we then be able to do so in $o(n \log n)$ time, that is, less than $\Omega(n \log n)$ time? The answer is no, as we will see now, using a reduction from the $\eps$-closeness problem.

\paragraph{The proof}
In its simplest form, the fixed-order algebraic decision tree model captures computer programs that take as input a sequence of real numbers, and consist of a number of instructions, each of which is of one of three types: 1. evaluate a polynomial function of the input variables, and use the sign of the result to choose which instruction to execute next; 2. output ``yes''; 3. output ``no''. The polynomial functions must have degree at most $d$, for some constant $d$: that is, each function is a sum of terms, where each term is a product of a constant and at most $d$ factors, where each factor is an input variable. 

The \emph{$\eps$-closeness} problem is the following: given an unordered multiset $A$ of real numbers $a_1,...,a_n$ and a positive real number $\eps$, decide whether $A$ contains two numbers that differ less than $\eps$. In other words: is there a pair of indices $i \neq j$ such that $0 \leq a_j - a_i < \eps$, or equivalently, $a_i \leq a_j < a_i + \eps$? Any fixed-order algebraic decision tree algorithm that solves the $\eps$-closeness problem takes $\Omega(n \log n)$ time in the worst case~\cite{PS}.

Let us define the \emph{any-point-inside} problem as follows: given an unordered multiset $S$ of $n$ points in the plane (given as $2n$ real coordinates), decide whether any point of $S$ lies in the interior of the convex hull of $S$. Let $api(n)$ be the number of steps it takes in the worst case to solve this problem in the fixed-order algebraic decision tree model.

\begin{lemma}\label{lem:allowingequals}
Given a multiset of real numbers $A = a_1,...,a_n$ and a real number $\eps$, we can decide in $api(2n)$ time whether there are two indices $i \neq j$ such that $0 < a_j - a_i < \eps$, or equivalently, $a_i < a_j < a_i + \eps$.
\end{lemma}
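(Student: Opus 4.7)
The plan is to reduce the lemma's decision problem in linear time to an any-point-inside query on a set $S \subset \mathbb{R}^2$ of $2n$ points constructed from $A$ and $\eps$. Given such a reduction, running the oracle on $S$ decides the question in $api(2n)$ time; the $O(n)$ arithmetic needed to define $S$ is absorbed because the coordinates of $S$ are just polynomial functions of the $a_i$ and $\eps$, and the decision tree that answers any-point-inside on $S$ is automatically also a decision tree (of the same depth, in constant-degree polynomials of the $a_i$ and $\eps$) for our problem.

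The natural construction exploits the parabola $y = x^2$: for each $i$ put $p_i = (a_i, a_i^2)$ and $q_i$ near $(a_i + \eps, (a_i + \eps)^2)$. The pivotal geometric fact, coming from the strict convexity of the parabola, is that the chord joining $(a_i, a_i^2)$ and $(a_i + \eps, (a_i + \eps)^2)$ lies strictly above the parabola for every $x \in (a_i, a_i + \eps)$; hence $(a_j, a_j^2)$ lies strictly below that chord exactly when $a_i < a_j < a_i + \eps$ holds strictly, and it lies on a chord endpoint exactly when $a_j \in \{a_i, a_i + \eps\}$. Putting all $2n$ points bare on the parabola is not yet enough, because on a strictly convex curve every distinct point is already a hull vertex; so the construction needs to slightly perturb the $q_i$ off the parabola (for instance replacing $q_i$ by $(a_i + \eps, (a_i + \eps)^2 + c)$ for a carefully chosen constant $c > 0$) so that the ``strictly below some chord'' condition turns into a ``strictly inside the convex hull'' condition.

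With such a construction, the forward direction (strict close pair $\Rightarrow$ interior point of $S$) reduces to an explicit barycentric-coordinate check showing that the perturbed $q_i$ lies strictly inside a triangle formed by three hull vertices of $S$; the backward direction (interior point $\Rightarrow$ strict close pair) is the contrapositive, arguing that when $A$ has no two distinct values within $\eps$ every point of $S$ lands on the hull boundary. The main obstacle I expect is the boundary case $a_j - a_i = \eps$: a naive constant offset $c$ produces a false positive here, because the lifted $q_i$ then sits directly above $p_j$ and can end up strictly inside a hull triangle. Resolving this requires choosing the perturbation so that the equality $a_j = a_i + \eps$ places the candidate interior point exactly on a hull edge via a collinearity; duplicates $a_i = a_j$ are automatically benign, since the constructed points then coincide and so lie on the boundary. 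This boundary-case analysis is where I expect the bulk of the technical work to live.
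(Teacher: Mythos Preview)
Your high-level plan is exactly the paper's: lift $A$ to the parabola $y=x^2$, add a second $\eps$-dependent family, and solve any-point-inside on the resulting $2n$ points. The missing ingredient is the correct placement of the second family, and your suggested $q_i=(a_i+\eps,(a_i+\eps)^2+c)$ does not work for any constant $c>0$. Putting all the $q$'s on $y=x^2+c$, the tangent to this inner parabola at $q_i$ meets $y=x^2$ at $x=a_i+\eps\pm\sqrt c$, so the supporting-line test at $q_i$ naturally flags an interval centred at $a_i+\eps$, which can never equal $(a_i,a_i+\eps)$. Concretely, even under the hypothesis ``no $a_i<a_j<a_i+\eps$'' the value $a_j=a_i+\eps$ is permitted, and then $p_j$ sits directly below $q_i$; a short computation shows that for every $0<c<2\eps^2$ the point $q_i$ lies strictly inside $\triangle\, p_ip_jq_j$, a genuine false positive rather than a boundary collinearity you could tune away. (For $c$ large the $q$'s migrate to the upper hull and you get false negatives instead.) So the difficulty is not only the $a_j=a_i+\eps$ edge case you identify; it is the $x$-coordinate of $q_i$ itself.

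The paper instead places the second point at the \emph{midpoint} of the chord: $T_i=\bigl(a_i+\tfrac{\eps}{2},\,(a_i+\tfrac{\eps}{2})^2+\tfrac{\eps^2}{4}\bigr)$. This point lies simultaneously on the segment from $L_i=(a_i,a_i^2)$ to $R_i=(a_i+\eps,(a_i+\eps)^2)$ and on the inner parabola $y=x^2+\eps^2/4$, and the line $\ell_i$ through $L_i,T_i,R_i$ is tangent to that inner parabola at $T_i$. Hence $\ell_i$ is a supporting line for $T_i$ (all $T_k$ on or above, all $L_k$ on or above) unless some $L_j$ lies strictly below it, which happens exactly when $a_i<a_j<a_i+\eps$; in that case $T_i$ is strictly inside $\triangle\, L_iL_jT_j$. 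Your boundary worry dissolves automatically: if $a_j=a_i+\eps$ then $L_j=R_i$ and $T_i$ sits on the edge $L_iL_j$, so it remains on the hull boundary.
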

\begin{proof}
Let $S(A, \eps)$ be $L \cup T$, where $L = L_1,...,L_n$ is the set of points given by $L_i = (a_i, a^2_i)$, and $T = T_1,...,T_n$ is the set of points given by $T_i = (a_i + \eps/2, (a_i + \eps/2)^2 + \eps^2/4)$. We will see shortly that $S(A, \eps)$ is constructed such that all of its points will appear on the boundary of its convex hull, unless some point $T_i$ is ``hidden'' by a nearby point $L_j$ where $0 < a_j - a_i < \eps$. Thus, to determine whether there are two indices $i \neq j$ such that $0 < a_j - a_i < \eps$, we simply solve the any-point-inside problem on $S(A, \eps)$ in $api(2n)$ time.

To prove the correctness of this reduction, we start with the following observations. For $i \in \{1,...,n\}$, let $R_i$ be the point $(a_i + \eps, (a_i + \eps)^2)$. Note that $L_i$ and $R_i$ lie on the parabola $y = x^2$, which we call the \emph{outer} parabola. The point $T_i$ lies exactly half-way between $L_i$ and $R_i$ on the parabola $y = x^2 + \eps^2/4$, which we call the \emph{inner} parabola. Moreover, the slope of the inner parabola at $T_i$ is $2 a_i + \eps$, which is exactly the slope of the line $\ell_i$ through $L_i$, $T_i$ and $R_i$, so $\ell_i$ is a tangent to the inner parabola. 

Now I claim the following: (i) if there are no $i$ and $j$ such that $a_i < a_j < a_i + \eps$, then all $2n$ points of $S(A, \eps)$ are on the boundary of the convex hull of $S(A, \eps)$; (ii) if there are $i$ and $j$ such that $a_i < a_j < a_i + \eps$, then $S(A, \eps)$ contains at least one point, namely $T_i$, that is not on the boundary of the convex hull of $S(A, \eps)$. Both claims are illustrated in Figure~\ref{fig:fig}.

\begin{figure}
\centering\includegraphics[width=\hsize]{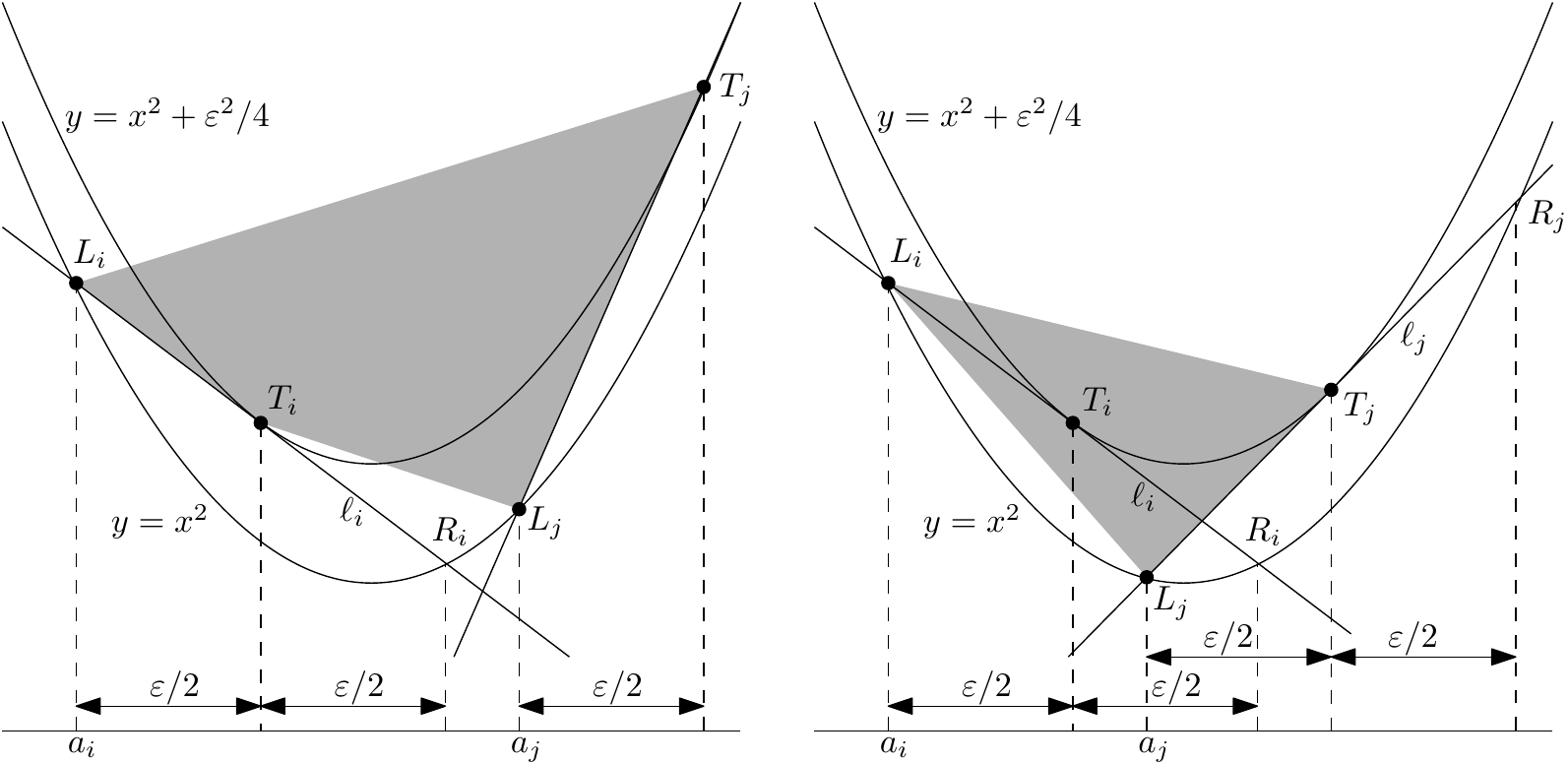}
\caption{Left: if all numbers in $A$ are at least $\eps$ apart, then all points of $L$ and $T$ are on the boundary of the convex hull. Right: if $a_i < a_j < a_i + \eps$, then $T_i$ lies in the interior of the triangle $L_i L_j T_j$.}
\label{fig:fig}
\end{figure}

Proof of (i): assume there are no $i$ and $j$ such that $a_i < a_j < a_i + \eps$. We will show that both $L_i$ and $T_i$ lie on the boundary of the convex hull for each $i$. Recall that the line $\ell_i$ through $L_i$, $T_i$ and $R_i$ is a lower tangent to the inner parabola and intersects the outer parabola in $L_i$ and $R_i$. So, if $S(A, \eps)$ would contain any points that lie strictly below $\ell_i$, these points would have to be points $L_j$ on the outer parabola strictly between $L_i$ and $R_i$. However, looking at the $x$-coordinates of these points, we would then find $a_i < a_j < a_i + \eps$, contradicting our assumptions. It follows that both $L_i$ and $T_i$ lie on the boundary of the convex hull, as there is a closed half-plane with $L_i$ and $T_i$ on the boundary and all other points of $S(A, \eps)$ inside.

Proof of (ii): we will show that $T_i$ lies in the interior of the triangle $L_i L_j T_j$. First, because $a_i < a_j < a_i + \eps$, the point $L_j$ lies on the outer parabola strictly between $L_i$ and $R_i$, so $L_j$ lies strictly under the line $\ell_i$ through $L_i$, $T_i$, and $R_i$, and vice versa, $T_i$ lies strictly above the line through $L_i$ and $L_j$. Second, recall that the line through $L_j$ and $T_j$ is a lower tangent to the inner parabola, touching it in $T_j$, so $T_i$ lies strictly above it. Third, the line through $L_i$ and $T_i$ is a lower tangent to the inner parabola, so $T_j$ lies strictly above it, and vice versa, $T_i$ lies strictly below the line through $L_i$ and $T_j$. It follows that $T_i$ lies in the interior of the triangle $L_i L_j T_j$, and thus, not on the boundary of the convex hull of $S(A, \eps)$.

Thus, $S(A, \eps)$ contains a point that lies in the interior of the convex hull of $S(A, \eps)$ if and only if there are $i$ and $j$ such that $a_i < a_j < a_i + \eps$. This proves the correctness of the reduction and thus proves the lemma.
\end{proof}

\begin{lemma}\label{lem:catching-identical-points}
Given a multiset of real numbers $A = a_1,...,a_n$ and a real number $\eps$, we can decide $\eps$-closeness in $api(2n)$ time, that is, we can decide in $api(2n)$ time whether there are two indices $i \neq j$ such that $0 \leq a_j - a_i < \eps$, or equivalently, $a_i \leq a_j < a_i + \eps$.
\end{lemma}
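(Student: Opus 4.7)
The only case in Lemma~2 not already covered by Lemma~1 is $a_i=a_j$ for some $i\neq j$, since Lemma~1's condition $a_i<a_j<a_i+\eps$ is strict. My plan is to decide $\eps$-closeness by combining two applications of Lemma~1: one applied to $A$ directly, to catch the strict $\eps$-close pairs, and one applied to a perturbed sequence $A'$ designed so that duplicates in $A$ turn into strict close pairs in $A'$.

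Concretely, I would pick a tiny $\mu>0$ with $n\mu<\eps/2$ (for instance $\mu=\eps/(10n)$) and set $A' = (a_1+\mu,\,a_2+2\mu,\,\ldots,\,a_n+n\mu)$; the entries of $A'$ are pairwise distinct by construction. I apply Lemma~1 to $A$ with threshold $\eps$ and to $A'$ with threshold $n\mu$, and return YES iff either call returns YES. Each call runs in $api(2n)$ time, so the total cost is $O(api(2n))$, which is still enough to derive the $\Omega(n\log n)$ lower bound for any-point-inside; I am reading the ``$api(2n)$'' in the lemma statement in this $O(\cdot)$ sense.

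For correctness, if $a_i=a_j$ with $i<j$ then $a'_j-a'_i=(j-i)\mu\in(0,n\mu)$ and the second call returns YES, while if $0<a_j-a_i<\eps$ for some $i\neq j$ then the first call returns YES by Lemma~1. Conversely, suppose both calls return NO. Then no pair of $A$ is strictly $\eps$-close, and every pair of $A'$ satisfies $|a'_i-a'_j|\geq n\mu$; combining the latter with the triangle inequality $|a_i-a_j|\geq|a'_i-a'_j|-(n-1)\mu\geq\mu>0$ rules out duplicates in $A$, and then the absence of strict $\eps$-closeness forces $|a_i-a_j|\geq\eps$ for all $i\neq j$, showing that $A$ is not $\eps$-close.

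The subtle point, which I expect to be the main thing to verify carefully, is the choice of $\mu$: it must be small enough that the separating inequality $|a_i-a_j|\geq|a'_i-a'_j|-(n-1)\mu\geq\mu$ holds comfortably for all inputs, yet large enough that the $n\mu$-window in the second call catches every duplicate. The choice $\mu=\eps/(10n)$ sits well inside the $\eps/(2n)$ threshold at which the argument would become tight, so the reduction is correct regardless of~$A$.
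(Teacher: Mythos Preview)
Your approach is essentially the paper's: perturb to $a'_i=a_i+i\mu$ (the paper takes $\mu=\eps/(2n)$ and second threshold $\eps/2$, you take $\mu=\eps/(10n)$ and threshold $n\mu$) and invoke Lemma~1 twice, reading the ``$api(2n)$'' bound as $O(api(2n))$, just as the paper implicitly does. One small fix: the entries of $A'$ are \emph{not} pairwise distinct ``by construction'' (e.g.\ $n=2$, $a_1=\mu$, $a_2=0$ gives $a'_1=a'_2=2\mu$); what is true is that they are distinct \emph{whenever the first call returns NO}, since $a'_i=a'_j$ with $i\neq j$ forces $0<|a_i-a_j|=|i-j|\mu<\eps$, which the first NO excludes---with that correction your converse step ``$|a'_i-a'_j|\geq n\mu$ for every pair'' is justified and the argument goes through.
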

\begin{proof}
We first decide, in $api(2n)$ time by Lemma~\ref{lem:allowingequals}, whether there are two indices $i \neq j$ such that $0 < a_j - a_i < \eps$. If this is the case, we answer ``yes''. Otherwise, we have now established that each pair of numbers in $A$ differs by either zero, or at least $\eps$, and what is left to decide is whether there is any pair with difference zero. Let $A' = a'_1,...,a'_n$ be given by $a'_i = a_i + \eps i/(2n)$. Thus, $|a'_j - a'_i|$ and $|a_j - a_i|$ differ by less than $\eps/2$ for any $i$ and $j$. Observe that, given that there are no $i, j$ such that $0 < a_j - a_i < \eps$, we now have $i \neq j$ and $a_j - a_i = 0$ if and only if $0 < |a'_j - a'_i| < \eps/2$. By Lemma~\ref{lem:allowingequals}, we can decide whether there is any pair $i, j$ for which the latter is the case in $api(2n)$ time. 
\end{proof}

Since we have an $\Omega(n \log n)$-time lower bound for $\eps$-closeness, we must now conclude $api(2n) = \Omega(n \log n)$, so $api(n) = \Omega(\frac n2 \log \frac n2) = \Omega(n \log n)$, and we obtain the following theorem:

\begin{theorem}\label{thm:boundary}
Given a multiset $S$ of $n$ points in the plane, any fixed-order algebraic decision tree algorithm takes $\Omega(n \log n)$ steps in the worst case to decide whether $S$ contains any point that lies in the interior of the convex hull of $S$.
\end{theorem}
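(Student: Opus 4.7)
The plan is to read Theorem~\ref{thm:boundary} as a restatement of the lower bound $api(n) = \Omega(n \log n)$, and then deduce this bound by chaining the reduction of Lemma~\ref{lem:catching-identical-points} with the known $\Omega(n \log n)$ lower bound for $\eps$-closeness in the fixed-order algebraic decision tree model.

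First I would observe that the problem stated in the theorem---deciding, given a multiset $S$ of $n$ points in the plane, whether any point of $S$ lies in the interior of the convex hull of $S$---is precisely the any-point-inside problem defined just before Lemma~\ref{lem:allowingequals}. Thus proving the theorem is the same as proving $api(n) = \Omega(n \log n)$.

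Next I would invoke Lemma~\ref{lem:catching-identical-points}, which shows that $\eps$-closeness on an instance of $n$ real numbers can be decided in $api(2n)$ steps in the fixed-order algebraic decision tree model. Combined with the Preparata--Shamos $\Omega(n \log n)$ lower bound for $\eps$-closeness cited earlier, this immediately forces $api(2n) \geq c\, n \log n$ for some constant $c > 0$ and all sufficiently large $n$.

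Finally I would carry out the standard asymptotic substitution $m = 2n$ to rewrite this as $api(m) = \Omega\!\left(\tfrac{m}{2} \log \tfrac{m}{2}\right) = \Omega(m \log m)$, which is exactly the bound claimed by the theorem. Since every ingredient is already in place from the preceding two lemmas and the cited $\eps$-closeness bound, there is no real obstacle here; the only thing worth noting is that the reductions used in Lemmas~\ref{lem:allowingequals} and~\ref{lem:catching-identical-points} build $S(A,\eps)$ (and the perturbed copy $A'$) from the input using only $O(n)$ polynomial operations of degree at most $2$, so the reductions genuinely live within the fixed-order algebraic decision tree model of any fixed degree $d \geq 2$ and do not inflate the lower bound.
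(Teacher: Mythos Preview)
Your proof is correct and follows essentially the same approach as the paper: invoke Lemma~\ref{lem:catching-identical-points} to conclude $api(2n) = \Omega(n \log n)$ from the cited $\eps$-closeness lower bound, then substitute $m = 2n$ to obtain $api(m) = \Omega(\tfrac{m}{2}\log\tfrac{m}{2}) = \Omega(m \log m)$. Your closing remark that the reductions use only degree-$2$ polynomials and hence stay within the fixed-order algebraic decision tree model is a welcome observation that the paper leaves implicit.
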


If, in the $\eps$-closeness problem, we replace $<$ by $\leq$, then an $\Omega(n \log n)$-time lower bound can be constructed in the same way as for the original $\eps$-closeness problem. Furthermore, we can adapt the proof of Lemma \ref{lem:allowingequals} to prove (i) if there are no $i$ and $j$ such that $a_i < a_j \leq a_i + \eps$, then all $2n$ points of $S(A, \eps)$ are extreme points (vertices) of the convex hull of $S(A, \eps)$; (ii) if there are $i$ and $j$ such that $a_i < a_j \leq a_i + \eps$, then $S(A, \eps)$ contains at least one point, namely $T_i$, that is not an extreme point of the convex hull of $S(A, \eps)$. Thus, we also get the $\Omega(n \log n)$-time lower bound for the following problem:

\begin{theorem}\label{thm:convexposition}
Given a multiset $S$ of $n$ points in the plane, any fixed-order algebraic decision tree algorithm takes $\Omega(n \log n)$ steps in the worst case to decide whether $S$ is in convex position (that is, whether $S$ is exactly the set of vertices of the convex hull of $S$).
\end{theorem}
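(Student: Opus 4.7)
My plan is to reuse the construction $S(A,\eps) = L\cup T$ from Lemma~\ref{lem:allowingequals} and reduce the $\leq$-variant of $\eps$-closeness directly to the convex-position decision, via the equivalence that $S(A,\eps)$ is in convex position if and only if $A$ contains no $i\neq j$ with $0 \leq a_j - a_i \leq \eps$. The correctness of this single-call reduction is exactly the two claims (i) and (ii) spelled out in the excerpt. Once they are established, the $\Omega(n\log n)$ bound for $\leq$-$\eps$-closeness transfers to convex-position testing on $2n$ points, and the $n\mapsto n/2$ rescaling (as used between Lemma~\ref{lem:catching-identical-points} and Theorem~\ref{thm:boundary}) yields the theorem.

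Claim (ii) is the easier half. The subcase $a_i < a_j < a_i + \eps$ is already settled by Lemma~\ref{lem:allowingequals}, which places $T_i$ strictly inside triangle $L_iL_jT_j$, hence not extreme. The genuinely new subcase is $a_j = a_i + \eps$: then $L_j = R_i$, and since $T_i$ sits on $\ell_i$ exactly between $L_i$ and $R_i$, $T_i$ lies on the open segment $L_iL_j \subseteq \mathrm{conv}(S\setminus\{T_i\})$ and thus is not a vertex. Finally, if $a_i = a_j$ for some $i\neq j$, then $L_i = L_j$ is a repeated point, so $L_i \in \mathrm{conv}(S\setminus\{L_i\})$ and $S$ is again not in convex position.

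Claim (i) is the main obstacle, because we now need a strict supporting line at each point rather than mere boundary membership. The hypothesis of (i), combined with the exclusion of repeated $a$-values just handled in (ii), lets me treat the $a_i$ as pairwise distinct. For each $L_i$ the tangent $y = 2a_ix - a_i^2$ to the outer parabola is a strict supporting line: every other $L_k$ lies above by the identity $a_k^2 - (2a_ia_k - a_i^2) = (a_k - a_i)^2 > 0$, and every $T_k$ lies above because the inner parabola sits a constant $\eps^2/4$ above the outer one. For each $T_i$ the line $\ell_i$ touches no point of $S$ except $L_i$ and $T_i$: the hypothesis rules out $L_k$ strictly below $\ell_i$ (which would require $a_i < a_k < a_i + \eps$) and also $L_k = R_i$ (which would require $a_k = a_i + \eps$), while tangency of $\ell_i$ to the inner parabola keeps all $T_k$ with $k \neq i$ strictly above. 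An infinitesimal rotation of $\ell_i$ about $T_i$, in the direction that lifts $L_i$ off the line, preserves the strict position of all other points and produces a supporting line meeting $S$ only at $T_i$, certifying $T_i$ as a vertex.

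Assembling these cases yields the desired equivalence and hence the $\Omega(n\log n)$ lower bound.
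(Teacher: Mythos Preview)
Your proposal is correct and follows essentially the same approach as the paper: reduce the $\leq$-variant of $\eps$-closeness to convex-position testing via the same construction $S(A,\eps)$, replacing the strict condition $a_i < a_j < a_i + \eps$ in claims (i) and (ii) of Lemma~\ref{lem:allowingequals} by $a_i < a_j \leq a_i + \eps$. Your write-up is in fact more explicit than the paper's one-line sketch: you spell out the boundary case $a_j = a_i + \eps$ (where $L_j = R_i$ and $T_i$ is the midpoint of $L_iL_j$), the duplicate case $a_i = a_j$, and the infinitesimal-rotation argument that upgrades the supporting line $\ell_i$ at $T_i$ from weak to strict---details the paper leaves to the reader.
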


To output the vertices of the convex hull of $S$, we would need to extend the model of computation so that it admits other output than just ``yes'' or ``no''. No matter how exactly we do that, it is clear that we cannot output the vertices of the convex hull in less than $\Omega(n \log n)$ time, if we cannot even decide whether or not all of $S$ should be output in less than $\Omega(n \log n)$ time.

\subsubsection*{Other proofs}

\enlargethispage{\baselineskip}
\textbf{Preparata and Shamos} provide another proof that it takes $\Omega(n \log n)$ steps to decide whether a set of points is in convex position (Theorem 3.3 in Section 3.2 in their book~\cite{PS}). Their proof is based directly on the lower bounds for membership tests for a set $W$ in high-dimensional space with $n!$ connected components---the same lower bounds that also underly the lower bounds for $\eps$-closeness that we used above. The core of their proof is therefore an analysis of the number of connected components of $W$, the subset of possible inputs that corresponds to sets of points in convex position (\cite{PS}, p102--103). The key argument considers pairs of inputs in convex position that must lie in different components, because they cannot be continuously transformed into each other without passing through a configuration in which three points are collinear\footnote{The reader who wants to verify the details of the proof should beware of minor typing mistakes that make it look as if permutations on $N$ integers are applied to integers from 0 to $2N-1$ or from 0 to $N^2-1$.}. Note that a triple of collinear points immediately rules out convex position, but it is does not imply that one of the points lies in the interior of the convex hull (they could all lie on the edge). Thus, the proof by Preparata and Shamos is a bit more specific than ours: they prove our Theorem~\ref{thm:convexposition}, but not our Theorem~\ref{thm:boundary}.

\textbf{Kirkpatrick and Seidel}~\cite{KS} prove the following (Theorem 5.3): given a set $S$ of $n$ distinct points and a natural number $h \leq n$, any fixed-order algebraic decision tree algorithm requires, in the worst case, $\Omega(n \log h)$ steps to verify that the convex hull of $S$ has $h$ vertices. When $h$ is polynomial in $n$, this bound is equivalent to $\Omega(n \log n)$. Clearly, if we could compute in $o(n \log n)$ time which points of $S$ are vertices of the convex hull, then we could also count them in $o(n \log n)$ time. Thus, the theorem by Kirkpatrick and Seidel implies an $\Omega(n \log n)$ lower bound for the problem of computing the vertices of the convex hull. 

Like the proof of our Lemma~\ref{lem:catching-identical-points}, the proof by Kirkpatrick and Seidel uses a reduction from a problem on a multiset $A$ of real numbers, namely the problem of verifying the number of distinct numbers in the set (in the proof of Lemma~\ref{lem:catching-identical-points}, the problem is to verify that \emph{all} numbers of the set are distinct). To this end, each number $a_i$ in $A$ is mapped to a point on a parabola and perturbed by moving it over a distance that grows with $i$. Thus, whenever there are $k > 1$ points that represent the same number, these points all become distinct; moreover, the perturbations are carefully chosen such that $k-1$ of these points are no longer a vertex of the convex hull. The perturbations could be implemented by adapting all nodes that evaluate the coordinates of the input points in the decision tree, so that they evaluate the perturbed coordinates instead of the original coordinates. The perturbations must be small enough, so that points that were distinct already have no effect on each other---in our proof of Lemma~\ref{lem:catching-identical-points}, we ensure this by limiting the perturbation in the horizontal direction to $\eps/2$.

Of course, smaller perturbations would also work and lead to the same end result. This is where Kirkpatrick and Seidel use a clever trick: rather than using perturbations of a fixed size, they describe how to adapt the decision tree (making it only slightly higher) so that it effectively takes the same decisions that it would take with \emph{any} small enough perturbations. Thus, Kirkpatrick and Seidel can realize a reduction from the multiset size verification problem, where no prespecified difference threshold $\eps$ can be used to determine the size of the perturbations.


\begin{thebibliography}{99}

\bibitem{KS}
D. G. Kirkpatrick and R. Seidel. The ultimate planar convex hull algorithm? \emph{SIAM J. Comput.} 15:287--299, 1986.

\bibitem{K}
R. Klein. \emph{Algorithmische Geometrie}, 2nd edition, Springer, 2005.

\bibitem{PS}
F. P. Preparata and M. I. Shamos, \emph{Computational Geometry}, Springer, 1985.

\end{thebibliography}
\end{document}